\documentclass[10pt]{article}

\usepackage[T1]{fontenc}
\usepackage[utf8]{inputenc}

\usepackage[english]{babel}

\usepackage{amssymb,amsthm,mathtools}
\usepackage{amsfonts,dsfont}

\usepackage{lmodern}

\usepackage[text={122mm,193mm},centering]{geometry}
\geometry{paperwidth=17cm,paperheight=24cm}

\usepackage{csquotes,microtype,xcolor}
\usepackage{graphicx,siunitx,setspace}

\usepackage{wrapfig,caption}
\captionsetup{size=small,labelfont=bf,labelsep=period}
\captionsetup[figure]{name=Fig.\hspace{-0.5ex}}

\usepackage[numbers]{natbib}
\bibliographystyle{refstyle}

\usepackage[unicode,colorlinks,bookmarksnumbered]{hyperref}
\hypersetup{linkcolor=blue,citecolor=blue,urlcolor=blue}

\usepackage{orcidlink}

\usepackage[capitalise]{cleveref}
\crefname{figure}{Fig.}{Figs.}

\usepackage{tikz,algorithm,algpseudocode}

\theoremstyle{plain}
\newtheorem{theorem}{Theorem}

\theoremstyle{definition}

\theoremstyle{remark}

\renewcommand*{\le}{\ensuremath \leqslant} 
\renewcommand*{\ge}{\ensuremath \geqslant}

\newcommand*{\R}{\ensuremath \mathbb{R}}
\newcommand*{\dr}{\ensuremath \partial}
\newcommand*{\intd}[1]{\ensuremath \mathrm{d}#1}
\newcommand*{\segn}{\ensuremath {\{1,\dots,n\}}}

\newcommand*{\koni}{\ensuremath k_{\text{on},i}}
\newcommand*{\koffi}{\ensuremath k_{\text{off},i}}
\newcommand*{\Exp}{\ensuremath \mathcal{E}} 
\newcommand*{\Prob}{\ensuremath \mathbb{P}} 
\newcommand*{\Esp}{\ensuremath \mathbb{E}} 
\newcommand*{\Yold}{\ensuremath Y_{\text{old}}} 
\newcommand*{\Zold}{\ensuremath Z_{\text{old}}} 
\newcommand*{\Told}{\ensuremath T_{\text{old}}} 
\newcommand*{\flotY}{\ensuremath \varphi_{\text{M}}} 
\newcommand*{\flotZ}{\ensuremath \varphi_{\text{P}}} 
\newcommand{\AltTextCMSB}[1]{} 

\title{\vspace{-18mm}\textbf{Efficient stochastic simulation of gene regulatory networks using hybrid models of transcriptional bursting}}

\author{\normalsize Mathilde Gaillard\,\orcidlink{0009-0002-9520-3814} and Ulysse Herbach\,\orcidlink{0000-0002-0972-385X}}

\date{\normalsize Université de Lorraine, CNRS, Inria, IECL, F-54000 Nancy, France\\\small\texttt{\{mathilde.gaillard, ulysse.herbach\}@inria.fr}}

\begin{document}

\maketitle

\begin{center}
\begin{minipage}{10.25cm}
\small\noindent\textbf{Abstract.}
Single-cell data reveal the presence of biological stochasticity between cells of identical genome and environment, in particular highlighting the transcriptional bursting phenomenon. To account for this property, gene expression may be modeled as a continuous-time Markov chain where biochemical species are described in a discrete way, leading to Gillespie's stochastic simulation algorithm (SSA) which turns out to be computationally expensive for realistic mRNA and protein copy numbers. Alternatively, hybrid models based on piecewise-deterministic Markov processes (PDMPs) offer an effective compromise for capturing cell-to-cell variability, but their simulation remains limited to specialized mathematical communities. With a view to making them more accessible, we present here a simple simulation method that is reminiscent of SSA, while allowing for much lower computational cost. We detail the algorithm for a bursty PDMP describing an arbitrary number of interacting genes, and prove that it simulates exact trajectories of the model. As an illustration, we use the algorithm to simulate a two-gene toggle switch: this example highlights the fact that bimodal distributions as observed in real data are not explained by transcriptional bursting per se, but rather by distinct burst frequencies that may emerge from interactions between genes.

\vspace{3.2mm}

\noindent\textbf{Keywords:} Gene expression, Regulatory networks, Stochastic simulation, Piecewise-deterministic Markov processes, Single-cell transcriptomics
\end{minipage}\medskip
\end{center}

\section{Introduction}

A common simplification of the gene expression mechanism considers that a gene is transcribed into mRNA, which in turn is translated into proteins. With the advent of technologies that can measure this process at the level of individual cells and individual molecules, it has become clear that the first step is not as smooth as previously expected from bulk data~\cite{Chubb2010}. In particular, mRNA synthesis typically occurs in bursts, i.e., short but intense periods of transcription during which many mRNA copies are produced~\cite{Raj2006,Rodriguez2020} (corresponding to convoys of closely spaced polymerases~\cite{Tantale2016,Wang2019}), interspersed with inactive periods: these transcriptional bursts generate intrinsic stochasticity~\cite{Raj2006,Nicolas2017} with functional implications for the cell~\cite{Raj2008}.

In the context of gene regulatory networks, this fundamental property should therefore not be ignored, especially when considering single-cell data~\cite{Herbach2017,Ventre2023}. It is worth noticing that single-cell RNA-seq data, which measures the expression of many genes simultaneously but is usually seen as very noisy, turns out to be largely consistent with much more precise techniques such as RT-ddPCR~\cite{Albayrak2016} and smFISH~\cite{Raj2006,Singer2014}. In previous work, we introduced a mechanistic approach to gene regulatory network inference~\cite{Herbach2017} that was later implemented as a Python package~\cite{Herbach2023}. This approach can be interpreted as calibration of a particular dynamical model describing an arbitrary number of interacting genes, formulated as a piecewise-deterministic Markov process (PDMP) driven by transcriptional bursting: once calibrated, the model can be used to simulate single-cell data and is able to reproduce the biological variability observed experimentally~\cite{Ventre2023}.

Here we focus on exact simulation of this model. Indeed, while PDMPs are being used increasingly as a powerful hybrid framework for modeling various biological phenomena~\cite{Rudnicki2017}, their simulation is not yet as widespread as their fully discrete counterpart, the well-known Doob–Gillespie algorithm, also known as stochastic simulation algorithm (SSA, see e.g.~\cite{Schnoerr2017} for an introduction). The aim of this article is to present a simple algorithm capable of generating exact trajectories: it is reminiscent of SSA but introduces an acceptance-rejection step which, although sub-optimal in general, is straightforward to implement and considerably reduces the computational cost. As an underlying message, we argue that transcriptional bursting represents an ideal situation for this type of hybrid model, where abundant chemical species are described using ordinary differential equations (ODEs) corresponding to mass action kinetics, but affected by discrete events related to rare species.

The algorithm we propose is not new per se: it can be seen as a particular case of a more general framework for simulating PDMPs~\cite{Lemaire2018}. Nevertheless, the present paper is motivated by three main reasons.

First, we provide a particular instance that is dedicated to modeling gene regulatory networks, aiming at a wider adoption of this framework to simulate large networks efficiently, as an alternative to exact but inefficient SSA, or faster but approximate SSA variants. We emphasise that the bursty PDMP is the simplest model capable of correctly capturing the specific variability (namely, distribution tails) observed in current single-cell data.

Second, we adopt a simplified viewpoint compared to more theoretically oriented papers~\cite{Benaim2015,Lemaire2018}, in order to highlight the feasibility and modularity of this framework in the context of systems biology. In particular, the proposed procedure can be easily extended beyond the two-stage mRNA-protein model, for example by distinguishing mature mRNA and pre-mRNA~\cite{Rudnicki2015}.

Finally, we could not find a detailed proof of the algorithm in the literature, that is, a proof that the construction procedure simulates exact trajectories of the model. In more mathematical terms, one needs to show that the constructed Markov process actually has the correct infinitesimal generator. 
The procedure is similar to the more common continuous-time Markov chains but needs a specific treatment, due to the fact that the state of the system between jumps is not constant. We therefore provide a proof here for completeness.

This paper is organized as follows. In \cref{sec_model} we introduce the model, and then detail the simulation algorithm in \cref{sec_algorithm}. The interest of this model is discussed in \cref{sec_discussion}, and a proof of the algorithm is given in \cref{sec_proof}.

\section{Mathematical model}
\label{sec_model}

In the following, we consider a network of $n$ genes possibly in interaction. We start with the well-known ‘two-state model’ of gene expression, which corresponds to the following set of elementary chemical reactions for each gene~$i \in \segn$:
\begin{equation}
\label{eq_reactions}
\begin{array}{ccc}
G_i \xrightleftharpoons[\koffi]{\koni} {G_i}^*,\,\quad & {G_i}^* \xrightarrow[]{s_{0,i}} {G_i}^* + M_i,\,\quad & M_i \xrightarrow[]{d_{0,i}} \varnothing , \\
& M_i \xrightarrow[]{s_{1,i}} M_i + P_i,\,\quad & P_i \xrightarrow[]{d_{1,i}} \varnothing ,
\end{array}
\end{equation}
where $G_i$, ${G_i}^*$, $M_i$ and $P_i$ denote inactive promoter (off), active promoter (on), mRNA and protein for gene~$i$.
These reactions describe the main two stages of gene expression, namely \emph{transcription} (rate $s_{0,i}$) and \emph{translation} (rate $s_{1,i}$), along with degradation of mRNA (rate $d_{0,i}$) and protein (rate $d_{1,i}$) molecules. Note that $G_i$ and ${G_i}^*$ satisfy a conservation relation and only one molecule of $G_i$ or ${G_i}^*$ is present at a time: for convenience we define $E_i \in \{0,1\}$ based on ${G_i}^*$.
The particularity of the two-state model is that transcription can only occur when the promoter is active, i.e., $E_i=1$ (\cref{fig1}A). The bursty regime corresponds to $\koffi \gg \koni$, meaning that promoter active periods are short compared to inactive periods (with $s_{0,i} \gg d_{0,i}$ such that $\koni s_{0,i}/(\koffi d_{0,i})$ is fixed). In the limiting case, active periods become infinitely short so the promoter state is not needed anymore and the gene activity is characterized by its \emph{burst frequency} $\koni$ (\cref{fig1}B).

Although regulation can take different forms~\cite{Molina2013,Sanchez2013}, modulation of burst frequency appears as a fundamental mechanism for possible interactions between genes~\cite{Larson2013,Nicolas2018,Wang2019,Jeziorska2022}. We therefore assume that $\koffi$ is constant, while $\koni$ is a function of protein levels $P = (P_1, \dots, P_n)$. In our paradigm, the bursts themselves are thus not coordinated directly (which would be relevant for genes in the same genomic location~\cite{Raj2006}), but interactions arise through functions $\koni$ which introduce correlation of burst frequencies.
For example, in~\cite{Herbach2023,Ventre2023} these functions take the following form:
\[\koni(P) = \frac{k_{0,i} + k_{1,i}\exp(\beta_i + \sum_{j=1}^n\theta_{ji}P_j)}{1 + \exp(\beta_i + \sum_{j=1}^n\theta_{ji}P_j)} \qquad\forall i\in\segn\]
with $k_{0,i} \ll k_{1,i}$ so that $\beta_i$ encodes basal activity of gene $i$ and $(\theta_{ij})_{1\le i,j \le n}$ can be interpreted as a network interaction matrix.

\begin{figure}[ht]
\includegraphics[width=\textwidth]{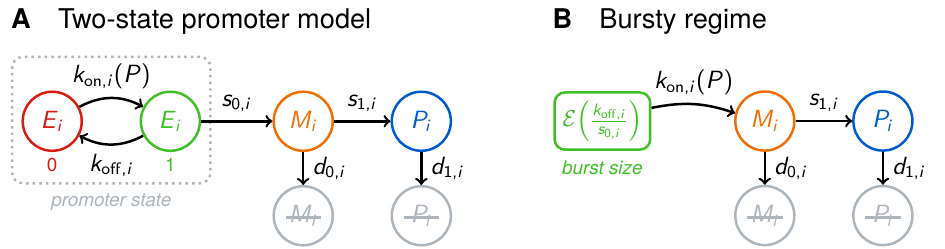}
\caption{General mechanistic model for the expression of a gene $i$, used as a basic unit for gene regulatory networks made of $n$ genes. Variables $M_i$ and $P_i$ denote respectively mRNA and protein quantities associated to gene $i$ in the cell. The common two-state promoter model (A) can be simplified in the bursty regime ($\koni \ll \koffi$) by discarding the promoter state $E_i$. The resulting model (B) describes instantaneous bursts of mRNA production, occurring at random times with rate $\koni(P)$ where $P=(P_1,\cdots,P_n)$ is the vector of protein levels. In this paradigm, the genes are indirectly coupled together through interaction functions $\koni$ which modulate burst frequencies.}
\label{fig1}
\AltTextCMSB{General mechanistic model for stochastic gene expression, used as a basic unit for gene regulatory networks.}
\end{figure}

When modeled as a PDMP (see~\cite{Herbach2017} and references therein), the dynamics are given by
\begin{equation}\label{eq_model_twostate}
\left\{\begin{array}{rl}
E_i(t) &\hspace{-2mm} : 0 \xrightarrow{\koni} 1,\,\; 1 \xrightarrow{\koffi} 0 \vspace{1.5mm} \\
{M_i}'(t) &\hspace{-2mm} = s_{0,i}E_i(t) - d_{0,i} M_i(t) \vspace{1.5mm} \\
{P_i}'(t) &\hspace{-2mm} = s_{1,i}M_i(t) - d_{1,i}P_i(t)
\end{array}\right.
\end{equation}
for each gene $i \in \segn$, where we omit the dependence of $\koni$ on $P$ for clarity. An example of stochastic trajectory is given in \cref{fig2} (bottom left), to be compared with usual SSA (\cref{fig2}, top left).
This is the model being considered in~\cite{Raj2006,Herbach2017,Bonnaffoux2019,Koshkin2024}.
It is also similar to that of~\cite{Kurasov2018}, with two differences: (1) we describe explicitly mRNA levels (not only protein levels) and (2) we consider that interactions arise only from $\koni$ and most importantly are not linear (i.e., $\koni$ is a nonlinear, bounded function of $P$): this is an important aspect that is necessary to match real data (see \cref{sec_discussion}).

\begin{figure}[ht]
\includegraphics[width=\textwidth]{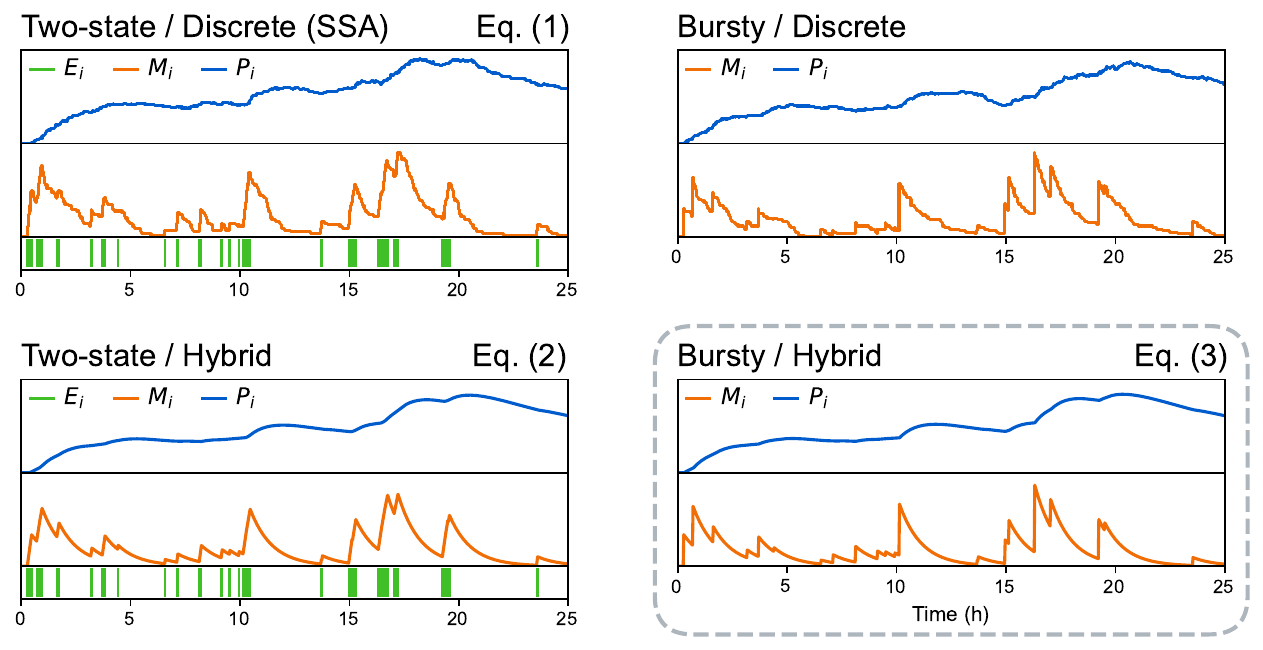}
\caption{Different mathematical frameworks for the stochastic gene expression model described in \cref{fig1}. The basic SSA (top left) corresponds to a continuous-time Markov chain implementing the stochastic mass action kinetics~\eqref{eq_reactions}. The bursty version (top right) removes the promoter description and is driven by instantaneous bursts following a geometric distribution, while the hybrid version (bottom left) describes continuous levels of mRNA and protein, forming a PDMP. The model we consider here is another PDMP that combines these two simplifications (bottom right). The burst sizes then follow an exponential distribution as shown in \cref{fig1}B.
To allow for comparison between models, the bursts are synchronized and the ranges of mRNA and protein in the four trajectories are identical (average $\langle M_i \rangle \approx 12$ copies/cell and $\langle P_i \rangle \approx 78$ copies/cell). Note that protein copy numbers are often much larger ($ P_i \sim 10^4$~\cite{Schwanhausser2011} or $10^5$~\cite{Albayrak2016}) so that discrete trajectories become indistinguishable from their continuous counterparts.}
\label{fig2}
\AltTextCMSB{Different mathematical frameworks for the stochastic gene expression model.}
\end{figure}

In the bursty regime, the promoter state disappears, mRNA is produced by instantaneous bursts and proteins are produced accordingly (see~\cite{Mackey2013,Chen2019} for mathematical details on the relation between the discrete and continuous models). In the PDMP setting, this results in the following model (still omitting the dependence of $\koni$ on $P$):
\begin{equation}\label{eq_model_bursty}
\left\{\begin{array}{rl}
M_i(t) &\hspace{-2mm} \xrightarrow{\koni} M_i(t) + \Exp\Big(\frac{\koffi}{s_{0,i}}\Big) \vspace{0.5mm} \\
{M_i}'(t) &\hspace{-2mm} = -d_{0,i} M_i(t) \vspace{1.5mm} \\
{P_i}'(t) &\hspace{-2mm} = s_{1,i} M_i(t) - d_{1,i} P_i(t)
\end{array}\right.
\end{equation}
for which an example of stochastic trajectory is given in \cref{fig2} (bottom right), to be compared with bursty SSA (\cref{fig2}, top right). This model is used in~\cite{Ventre2023,Nguyen2024}. A similar model is considered in~\cite{Pajaro2017} (using a PDE approach) with the difference that here we also explicitly model mRNA in order to fit single-cell transcriptomic data (see also \cref{sec_discussion} about the interpretation of bursts). Analytical results have been obtained for a single gene with feedback~\cite{Mackey2011}. Note that this is also precisely the limiting case considered in~\cite{Raj2006}. In absence of interaction between genes, the steady-state distribution of mRNA levels for each gene is a gamma distribution~\cite{Mackey2011,Malrieu2015}, consistently with the derivation made in~\cite{Raj2006}.

In the next section, we focus on exact simulation of model~\eqref{eq_model_bursty} as the simplest compromise for describing the transcriptional bursting property (\cref{fig2}). For mathematical completeness and with a view to proving the algorithm in \cref{sec_proof}, we mention that the model is characterized by its infinitesimal generator:
\begin{equation}\label{eq_generator}
\hspace{-2mm}\begin{array}{rll}
Lf(y,z) & \displaystyle \hspace{-2mm} =  \sum_{i=1}^n \left[ -d_{0,i}y_i \frac{\dr}{\dr y_i} f(y,z) + (s_{1,i}y_i - d_{1,i}z_i) \frac{\dr}{\dr z_i}f(y,z) \right. & \hspace{-3mm}\vspace{2mm}\\
& \qquad\displaystyle + \left. \koni(z) \int_0^{+ \infty} (f(y + he_i, z) - f(y,z)) b_i e^{-b_i h} \intd{h} \right]
\end{array}
\end{equation}
acting on smooth test functions $f:\R_+^n\times\R_+^n\to\R$, where $y=(y_1,\dots,y_n)$ and $z=(z_1,\dots,z_n)$ respectively correspond to values of $M(t)=(M_1(t),\dots,M_n(t))$ and $P(t)=(P_1(t),\dots,P_n(t))$, $e_i\in\R^n$ is the $i$-th vector of the canonical basis, and $b_i = \koffi/s_{0,i}$.
See~\cite{Malrieu2015} for a tutorial introduction to PDMP models, and~\cite{Rudnicki2017} for a more comprehensive reference.
Notably, this form highlights the fact that all genes follow the same individual dynamical model with no direct coupling between bursts, but rather frequency-modulated interactions coming from the dependence of burst frequency $\koni(z)$ on possibly all proteins $z_1,\dots,z_n$.

\section{Simulation algorithm}
\label{sec_algorithm}

In this section, we present a simple algorithm to simulate exact trajectories from the bursty PDMP model~\eqref{eq_model_bursty} with corresponding infinitesimal generator~\eqref{eq_generator}. Here the process is denoted by \( (Y(t), Z(t))_{ t \ge 0} \in \R_+^n \times \R_+^n \) where $Y(t)$ is the vector of mRNA levels and $Z(t)$ is the vector of protein levels.
Formally, the dynamics of this PDMP can be summarized as follows. Given a starting point, the process is driven by a deterministic flow until the first jump time \( T_1 \). Let $\varphi$ denote the flow related to the ODE system \eqref{eq_model_bursty}. Namely, starting from $(y,z) \in \R_+^n \times \R_+^n$, the solution at time $t \ge 0$ is  
\begin{equation}\label{eq_flow_base}
\varphi(y,z,t) = (\flotY(y,z,t),\flotZ(y,z,t)) \in \R_+^n \times \R_+^n
\end{equation}
where for $i \in \{1,\dots,n\}$ :
\[\left\{\begin{aligned}
[\flotY(y,z,t)]_i &= e^{-d_{0,i}t} y_i \\
[\flotZ(y,z,t)]_i &= e^{-d_{1,i} t} z_i + \big(e^{-d_{1,i}t} - e^{-d_{0,i}t}\big) \frac{s_{1,i}}{d_{0,i}-d_{1,i}}y_i
\end{aligned}\right.\]
gives the solution for the mRNA component $y$ and the protein component $z$ (note that we consider $d_{0,i} \neq d_{1,i}$, as derived from the biologically relevant regime $d_{0,i} > d_{1,i}$~\cite{Schwanhausser2011}, but the special case $d_{0,i}=d_{1,i}$ is also analytically solvable).
The distribution of $T_1$ is characterized by
\begin{equation} \label{eq_waiting_time}
\Prob_{y,z}(T_1 > t) = \exp \left( - \int_0^t \sum_{i=1}^n \koni(\flotZ(y,z,\tau)) \intd{\tau} \right)
\end{equation}
where $ \Prob_{y,z} $ denotes the probability conditionally to $ \{ Y(0) = y, Z(0) = z \} $. A new (random) position for the process is drawn, according to the current one. The motion then restarts from this new point, exactly as before. 

Presented this way, the dynamics seems straightforward. However, trying to directly simulate the next waiting time according to \eqref{eq_waiting_time} requires numerical integration, as the corresponding distribution generally does not have a closed form: the algorithm presented here is a natural solution to bypass such integration step. The main underlying assumption is that $\koni$ is continuous and bounded for every $i \in \segn$, which is biologically relevant (note that the bounding assumption is consistent, since the bursty regime already assumes $\koni \ll \koffi$). We can therefore consider a fixed parameter $\lambda$ such that
\begin{equation}\label{eq_jump_rate}
\lambda \ge \sup_{z\in \R_+^n} \Big\{\sum_{i=1}^n \koni(z)\Big\}
\end{equation}
and that will serve as a global \enquote{exponential clock} for computing jump times. From a computational perspective, the optimal value for $\lambda$ is the lowest possible one, which might not be available analytically. In practice, one can simply use $\lambda = \lambda_1 + \cdots + \lambda_n$ where $\koni(z) \le \lambda_i$ are individual bounds. The interest of the simulation method also lies in the fact that the ODE system in~\eqref{eq_model_bursty} can be solved analytically. 

Let $t \ge 0$ be a fixed time. In order to obtain a realisation of a (random) cell state $(Y(t), Z(t))$, we first need to simulate every random event that occurred before time $t$. The bounding assumption allows us to simulate event occurrences according to a homogeneous Poisson process with intensity $\lambda$ and reject some jumps, as in the classical \enquote{thinning} method for simulating non-homogeneous Poisson processes. Given an initial state, a simulation step consists in drawing a waiting time $U$, applying the deterministic flow during a time $U$ and then deciding which type of jump occurs according to the current position of the process: a \enquote{phantom} or a \enquote{true} jump. If it is a phantom jump, the process stays at the same position. On the other hand, if it is a true jump, a new random position is chosen. Namely, we select which gene is impacted by a jump, according to the probability distribution $\mathcal{P}(z)$ on $i\in\{0, 1, \dots, n\}$ defined by
\begin{equation}\label{eq_jump_probability}
p_i(z) = \begin{cases}
\displaystyle 1 - \frac{1}{\lambda} \sum_{i=1}^n \koni(z) & \text{if $i = 0$} \vspace{2mm} \\
\displaystyle \frac{\koni(z)}{\lambda} & \text{if $1 \le i \le n$} \vspace{1mm}
\end{cases}
\end{equation}
and add an exponential random variable of parameter \( b_i \) to the \( i \)-th component of the mRNA vector $Y(t)$. The simulation step is iterated with this new starting point. When the simulation time is beyond $t$, $(Y(t), Z(t))$ follows only the flow applied from the last jump until $t$.

This is formalized in \cref{algo1} and a trajectory of the mRNA component $ Y $ is represented in \cref{fig3}. We give here the algorithm to simulate one cell state $(Y(t), Z(t))$ but it can easily be modified to return cell trajectories collected along a time sequence $(t_1,\dots,t_k)$.

\begin{algorithm}[ht]
\caption{Simulate mRNA and protein levels from the network model~\eqref{eq_model_bursty}}
\label{algo1}
\begin{algorithmic}[1]
\Require initial state $(Y_0,Z_0)$ and final time $t > 0$
\State $Y, Z \gets Y_0,Z_0$ \Comment{\emph{Initialize current state}}
\State $T \gets 0$ \Comment{\emph{Initialize current jump time}}
\While{$T < t$}
    \State $\Yold,\Zold \gets Y,Z$
    \State $\Told \gets T$
    \State $U \gets \Exp(\lambda)$ \Comment{\emph{Draw waiting time using~\eqref{eq_jump_rate}}}
    \State $Y, Z \gets \varphi(\Yold, \Zold, U)$ \Comment{\emph{Apply the deterministic flow~\eqref{eq_flow_base}}}
	\State $i \gets \mathcal{P}(Z)$ \Comment{\emph{Draw $i$ with probability given by~\eqref{eq_jump_probability}}}
\If{$i \neq 0$}
    \State $Y[i] \gets Y[i] + \Exp(b_i)$ \Comment{\emph{Apply jump to mRNA level of gene $i$}}
\EndIf
\State $T \gets T + U$ \Comment{\emph{Update current jump time}}
\EndWhile
\State \Return $\varphi(\Yold, \Zold, t-\Told)$ \Comment{\emph{Extend to final time}}
\end{algorithmic}
\AltTextCMSB{Basic algorithm for exact simulation of the bursty PDMP model.}
\end{algorithm}

This algorithm is computationally efficient since for the same final time required by the user, it requires the generation of much fewer random variables than the SSA model. Although a proper benchmark is out of scope, we mention that for a particular instance of realistic parameter values, the PDMP simulation was found to be approximately 200 times faster than SSA~\cite{Herbach2017}.
Besides, it is also efficient among other PDMP simulation algorithms, as it does not use the distribution \eqref{eq_waiting_time} to simulate inter-arrival times: this would require numerically integrating the jump rate along trajectories, invert the cumulative distribution function and evaluate the inverse at a uniform random variable~\cite{Zeiser2008}.

In some circumstances, computational cost can still be improved. For example, we used the bound $\lambda$ over $\sum_i \koni$, but a bound over $\sum_i \koni \circ \flotZ $ may be sharper, making it possible to reduce the number of phantom jumps. Note that the flow~\eqref{eq_flow_base} has a unique stable attractor: combined with the fact that the jump rate is continuous, one can adapt the algorithm to unbounded jump rates. This requires computing an optimized bound at each iteration, as made for example in the implementation of the Harissa package~\cite{Herbach2023}.
Furthermore, we made a distinction here between $Y(t)$, which is impacted by jumps, and $Z(t)$, which is involved in interactions, but \cref{algo1} can be directly adapted to cases with more variables and other role combinations.

\begin{figure}[ht]
\includegraphics[width=\textwidth]{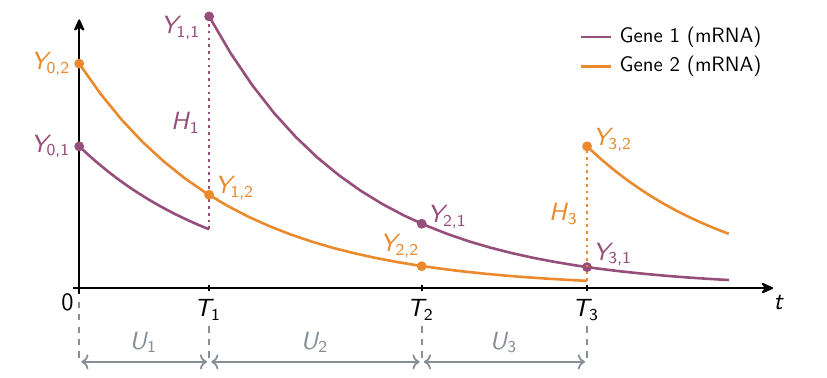}
\caption{Illustration of \cref{algo1} in the case $n=2$, showing the iterative construction for three jump times: here the first jump applies to gene $1$, the second jump is rejected, and the third jump applies to gene $2$. For simplicity, only the mRNA component is shown; the protein component is continuous and derived directly from the corresponding deterministic flow (i.e., using the analytical solution of third equation in~\eqref{eq_model_bursty}).}
\label{fig3}
\AltTextCMSB{Illustration of the simulation algorithm in the case of two genes, showing the iterative construction for three jump times.}
\end{figure}

\section{Discussion}
\label{sec_discussion}

Contrary to the usual SSA, which becomes costly when simulating single-cell gene expression data~\cite{Cannoodt2021}, the PDMP model presented here is much faster to simulate (see, e.g., a factor of 200 in~\cite{Herbach2017}). Mathematically, it can be justified as a rigorous hybrid limit of the discrete formalism when mRNA and protein levels span an important range~\cite{Crudu2009a,Crudu2012,Chen2019}, consistently with biological observations (see \cref{fig2} and~\cite{Schwanhausser2011,Singer2014,Albayrak2016}).
We also mention another interesting link: using the formalism of Poisson representation~\cite{Gardiner1977}, it is possible to show that in absence of interactions, the discrete and hybrid models are in fact related by an exact intertwining relation, with no need for considering a limit regime. Namely, the distribution from the SSA model can then be written as a mixture of Poisson distributions, whose time-dependent mixing distribution is that of the PDMP counterpart~\cite{Jahnke2007,Herbach2019}. This type of result is of major interest when distinguishing measurement and expression models for single-cell data analysis~\cite{Sarkar2021}.

Considering the two-state PDMP model, it is possible to adapt the algorithm presented here to this case, without changing the computational cost. However, we emphasize that only the bursty regime is observed in practice. Note that there is also a basic, \enquote{Poissonian} mode of expression~\cite{Zenklusen2008} but it is described by the \enquote{1-state} model (not 2) with the PDMP being trivial in this case (linear input/output ODE). Basically, the biological phenomenon of transcriptional bursting can be described correctly with the bursty model, not the 2-state model, except when precisely in the bursty regime.

Switching from 2-state PDMP to the bursty PDMP has two advantages: (1) we remove the description of the promoter, which frees up computational memory space, especially when simulating complete trajectories for many cells; (2) we force the correct biological regime, preventing users from accidentally simulating the unrealistic \enquote{saturated} regime ($\koffi < d_{0,i}$, inducing a peak of mRNA at $s_{0,i}/d_{0,i}$ with incorrect tail). In particular, we aggregate the parameters $\koffi$ and $s_{0,i}$, which are not identifiable in any case in the bursty regime.

Considering the simulation of PDMP models, previous authors~\cite{Zeiser2008} proposed a natural algorithm where the waiting time before the next jump is obtained by numerical integration to calculate the survival function of the waiting time, followed by inversion of this function (combined with uniform random variable). We prefer the acceptance-rejection approach because: (1) we are guaranteed exact simulations (unlike numerical integration, no time discretisation is required), and (2) it is easy to code. It is also potentially much faster, since the biologically interesting cases are precisely those where numerical integration is difficult (i.e. requires fine discretisation) because the $k_\text{on}$ function is \enquote{steep} at certain points (i.e., large Lipschitz constant).

As an illustration, we show in \cref{fig4} an example of basic toggle switch network, which generates a bistable pattern emerging from interactions.
We believe this example is interesting because a new \enquote{two-state} layer appears, which seems to be consistent with previous diffusion modeling~\cite{Huynh-Thu2015} while the authors were interpreting the two states as promoter states, possibly from a confusion of time scales and the fact that they had no single-cell data. The interpretation here is that bursts are transcriptional, with proteins tending to buffer these bursts, making the pattern robust.

\begin{figure}[ht]
\includegraphics[width=\textwidth]{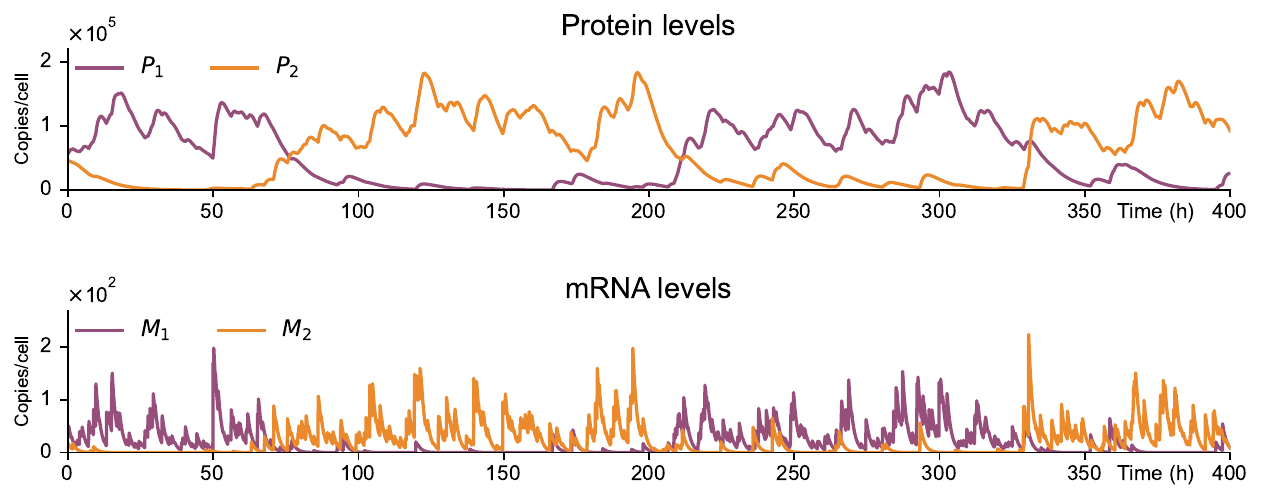}
\caption{Example of trajectory for a toggle switch consisting of two genes that repress each other. This system allows for the emergence of a bistable pattern with distinct \enquote{repressed} ($\koni(P) \approx \qty{0.07}{\per\hour}$) and \enquote{active} ($\koni(P) \approx \qty{1.4}{\per\hour}$) burst frequencies, made robust by proteins having slower degradation rates ($d_{1,i} \approx \qty{0.14}{\per\hour}$) and thus buffering mRNA bursts (with $d_{0,i} \approx \qty{0.7}{\per\hour}$). We argue that, although occurring at a different time scale, the two-state promoter model described in \cref{fig1}A is sometimes confused in the literature with this two-frequency pattern. Importantly, the latter is able to capture bimodal distributions as observed in real data~\cite{Singer2014}, while the two-state promoter per se is only bimodal when $\koffi < d_{0,i}$ (with mRNA \enquote{saturating} at $s_{0,i}/d_{0,i}$) and thus cannot generate bimodality with correct distribution tails~\cite{Herbach2017}.}
\label{fig4}
\AltTextCMSB{Example of trajectory for a toggle switch consisting of two genes that repress each other.}
\end{figure}

Some authors consider a different PDMP describing only proteins~\cite{Friedman2006,Lin2016}: this model can still be rigorously justified as a limit of very stable proteins (i.e., small degradation rate) compared to mRNA, but one must be careful about its interpretation, as it is often stated that in this regime, only 0 or 1 mRNA molecule can be present at a time (so-called \enquote{translational bursts}), which is contradicted by gene expression data showing many mRNA copies at a time.

Another efficient option is to use a diffusion model~\cite{Pratapa2020}, but then there is no biological foundation and in practice such models do not match the bursty model correctly~\cite{Lin2016} and need extra \enquote{ad hoc} noise in order to match real data.

Some more precise data indicate that the \enquote{off} time, i.e. waiting time between bursts, is sometimes non-exponential~\cite{Suter2011}, but \enquote{on} time is well described by exponential: the bursty regime would correspond to a non-exponential waiting time between two bursts, which would be a natural extension of the algorithm presented here.

Finally, thanks to its efficiency, this model can be implemented as an internal component of agent-based frameworks describing a population of spatially interacting cells~\cite{Miles2025}, as already done for example within the Simuscale software~\cite{Nguyen2024}.

\section{Proof}
\label{sec_proof}

In this section we present an explicit construction of the process corresponding to \cref{algo1} and illustrated in \cref{fig3}. We then prove that the infinitesimal generator of the resulting process is that of the PDMP model~\eqref{eq_generator}. This is sufficient to conclude that obtained trajectories are exact. The reader can refer to \cite{Batkai2017} for additional details on the underlying theory. 

In all this section we consider the space of real-valued functions on $\R_+^n \times \R_+^n$ that are continuous and vanish at infinity, equipped with the uniform norm. Namely,
\[ \Vert f \Vert_{\infty} = \sup_{(y,z) \in \R_+^n \times \R_+^n} \vert f(y,z) \vert. \]

\Cref{algo1} leads to the following construction. We first define a discrete-time Markov chain $(Y_k, Z_k)_{k \ge 0}$ where $(Y_k, Z_k) \in \R_+^n \times \R_+^n$, which corresponds to the right limit of the process $(Y(t), Z(t))_{t \ge 0}$ at specific (random) time points. These time points are given by a homogeneous Poisson process $ (N(t))_{t \ge 0} $ with intensity $ \lambda $. Let $(T_k)_{k \ge 0}$ denote its jump time sequence and $(U_k)_{k \ge 1}$ its inter-arrival times. Let $ Y_0 $ and $ Z_0 $ be random variables independent of $ (N(t))_{t \ge 0} $. For $k\ge 0$, we recursively define
\[\left\{\begin{aligned}
Y_k &= \flotY(Y_{k-1}, Z_{k-1}, U_k) + S_k \\
Z_k &= \flotZ(Y_{k-1}, Z_{k-1}, U_k)
\end{aligned}\right.\]
where $S_k$ is a random vector constructed as follows: at each step $k$, we draw a random index $i\in\{0,1,\dots,n\}$ according to the probability distribution \( \mathcal{P}(Z_k) \) defined at \eqref{eq_jump_probability}. When \( i \neq 0 \), we consider an additional variable \(E\) with exponential distribution $\mathcal{E}(b_i)$. We set
\[S_k = \begin{cases}
0 & \text{if $i = 0$} \vspace{2mm} \\
E e_i & \text{if $1 \le i \le n$} \vspace{1mm}
\end{cases}\]
where $ e_i $ is the \( i \)-th vector of the canonical base of $ \R^n $. Finally, we define $ (Y(t), Z(t))_{t \ge 0} $ for every $ t \ge 0 $ by applying the flow between jump times. Namely,
\[ \forall t \in [T_k, T_{k+1}), \quad (Y(t), Z(t)) = \varphi(Y_{k-1}, Z_{k-1}, t - T_k). \]
The following result states that the constructed process \( (Y(t), Z(t))_{t \ge 0} \) corresponds to the bursty model.

\begin{theorem}
The infinitesimal generator of the semigroup \( (T(t))_{t \ge 0} \) induced by \( (Y(t), Z(t))_{t \ge 0} \) is the operator defined at~\eqref{eq_generator}. Moreover the semigroup \( (T(t))_{t \ge 0} \) is strongly continuous.
\end{theorem}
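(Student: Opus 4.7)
The plan is to compute $\lim_{t \to 0^+}(T(t)f-f)/t$ directly on a suitable core, by conditioning on the behavior of the underlying Poisson clock $N$ on the small interval $[0,t]$. Since $N$ has constant rate $\lambda$, one has $\Prob(N(t)=0)=e^{-\lambda t}$, $\Prob(N(t)=1)=\lambda t e^{-\lambda t}$, and $\Prob(N(t)\ge 2)=O(t^2)$ uniformly in the initial condition. Because $f$ is bounded, the event $\{N(t)\ge 2\}$ contributes uniformly $O(t^2)$ and can be discarded from the generator computation.

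\textbf{Decomposition.} On $\{N(t)=0\}$ the process is $\varphi(y,z,t)$, contributing $e^{-\lambda t}f(\varphi(y,z,t))$. On $\{N(t)=1\}$, the unique event time is uniform on $[0,t]$; given that it equals $s$, the pre-jump state is $(y^s,z^s)=\varphi(y,z,s)$, the type $I$ is drawn from $\mathcal{P}(z^s)$, and on $\{I=i\ge 1\}$ a jump of size $H\sim\Exp(b_i)$ is added to the $i$-th mRNA coordinate before the flow is rerun for time $t-s$. Using the semigroup property of $\varphi$ on the phantom branch ($I=0$), this yields
\[
T(t)f(y,z) = e^{-\lambda t}\,f(\varphi(y,z,t)) + \lambda e^{-\lambda t}\int_0^t R(s,t)\,\intd{s} + O(t^2),
\]
with
\[
R(s,t) = p_0(z^s)\,f(\varphi(y,z,t)) + \sum_{i=1}^n p_i(z^s)\int_0^{+\infty}\! f\big(\varphi(y^s+he_i,z^s,t-s)\big)\,b_i e^{-b_i h}\,\intd{h}.
\]

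\textbf{Passing to the limit.} Subtracting $f(y,z)$, dividing by $t$ and letting $t \to 0^+$: the first contribution yields $-\lambda f(y,z) + L_0 f(y,z)$, where $L_0 f(y,z) = \sum_i[-d_{0,i} y_i \dr_{y_i} f + (s_{1,i} y_i - d_{1,i} z_i)\dr_{z_i} f]$ is the generator associated with $\varphi$; the integral contribution yields, by continuity of $R$ in $(s,t)$ at $(0,0)$, the limit $\lambda p_0(z)\,f(y,z) + \sum_i\koni(z)\int_0^{+\infty} f(y+he_i,z)\,b_i e^{-b_i h}\,\intd{h}$. Using the identity $\lambda p_0(z) = \lambda - \sum_i \koni(z)$, the $\lambda f(y,z)$ terms cancel and one recovers exactly expression~\eqref{eq_generator}, i.e.\ $Lf(y,z)$.

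\textbf{Strong continuity and main obstacle.} For strong continuity $\|T(t)f - f\|_\infty \to 0$ on $f \in C_0$, the same decomposition gives the $\{N(t)\ge 1\}$ contribution as $O(t)$ uniformly; on $\{N(t)=0\}$, uniform continuity of $f$ on a large compact set combined with the vanishing of $f$ at infinity (which controls both $f(y,z)$ and $f(\varphi(y,z,t))$ outside that compact, using that the flow maps neighbourhoods of infinity to neighbourhoods of infinity at fixed $t$) yields uniform convergence. The main technical obstacle is upgrading the pointwise identification of the generator into a genuine equality $A = L$ on $C_0$: this is handled by verifying that the convergence $(T(t)f - f)/t \to Lf$ holds in uniform norm on a convenient $T$-invariant core such as $C_c^1(\R_+^n \times \R_+^n)$, where the derivatives $\dr_{y_i} f$, $\dr_{z_i} f$ are uniformly controlled and the error terms in the expansion above can be bounded uniformly in $(y,z)$; the identification then extends to the full domain of $A$ by closedness, via the standard Hille--Yosida core argument.
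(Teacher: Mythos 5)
Your proposal is correct and follows essentially the same route as the paper's proof: conditioning on the number of Poisson events in $[0,t]$, discarding $\{N(t)\ge 2\}$ as a uniform $O(t^2)$ remainder, computing the zero- and one-jump contributions explicitly (your $R(s,t)$ is exactly the paper's $K_sJK_{t-s}f$), and cancelling the $\lambda f$ terms via $\lambda p_0(z)=\lambda-\sum_i\koni(z)$. The only difference is presentational — the paper names the operators $J$ and $K_t$ and writes the full series~\eqref{eq_semigroup}, while you add an explicit remark on the core/closedness step that the paper leaves implicit.
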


\begin{proof}
Let $\Esp_{y,z}$ denote the conditional expectation given initial data $ \Esp[ \,\cdot\, | Y_0 = y, Z_0 = z] $. The process \( (Y(t), Z(t))_{t \ge 0} \) is characterized by its semigroup \( (T(t))_{t \ge 0} \) where for every \( t \ge 0 \) the operator $ T(t) $ is defined by 
\[ T(t)f(y,z) = \mathbb{E}_{y,z} \left[ f(Y(t), Z(t)) \right]. \]
The infinitesimal generator of \( (T(t))_{t \ge 0} \) is defined by
\[ \lim_{t \rightarrow 0}  \frac{T(t)f - f}{t}, \]
for $ f $ such that the limit exists. To calculate this limit, we first obtain an explicit form for $(T(t))_{t \ge 0}$. In order to do that we first need the transition operator $ P $ of the Markov chain $ (Y_k, Z_k)_{k \ge 0} $. 
Let us define the operators $ J $ and $ K_t $ by 
\[ Jf(y, z) = f(y, z) + \frac{1}{\lambda} \sum_{i=1}^n \koni(z) \int_0^{+ \infty} (f(y + he_i, z) - f(y, z)) b_ie^{-b_ih} \intd{h} \]
and
\[K_t f(y,z) = f(\varphi(y,z,t)) .\] 
By definition, the transition operator $ P $ of $ (Y_k, Z_k)_{k \ge 0} $ is
\[ Pf(y,z) = \Esp_{y,z} \left[ f(Y_1, Z_1) \right]. \]
Using the recursive definition of $ Y_1 $ and $ Z_1 $ we have 
\begin{align*} 
\mathbb{E}_{y,z} \left[ f(Y_1, Z_1) \vert U_1 = u \right] &= \mathbb{E}_{y,z} \left[ f( \flotY(y, z, u) + S_1, \flotZ(y,z, u)) \right] \\
&= K_uJf(y, z).
\end{align*}
Thus the operator $ P $ is given by 
\[ Pf(y,z) = \Esp_{y ,z} \left[ K_{U_1} Jf(y,z) \right] = \int_0^{+ \infty} K_s Jf(y,z) \lambda e^{- \lambda s} \intd{s}. \]
Thanks to the previous construction, we have
\begin{equation} \label{eq_semigroup}
T(t)f = \sum_{n \ge 0} \mathbb{E} \left[\mathds{1}_{\{N(t) = n\}} K_{U_1}J \circ \cdots \circ K_{U_n}J \circ K_{t - T_n}f \right].
\end{equation}
Indeed, conditionally to having $ n $ events of the underlying Poisson process \( (N(t))_{t \ge 0} \), $ f(Y(t), Z(t)) $ is only the flow operator applied from the last jump until $ t $. Namely 
\[ \Esp_{y,z} \left[ f(Y(t), Z(t)) \vert N(t) = n \right] = \Esp_{y,z} \left[ K_{t - T_n} f(Y_n, Z_n) \right]. \]
Furthermore, for every $ k \ge 1 $, 
\begin{align*} 
\Esp_{y,z} \left[ f(Y_k, Z_k) \right] &= \Esp_{y,z} \left[ Pf(Y_{k-1}, Z_{k-1}) \right] \\
&= \Esp_{y,z} \left[ K_{U_{k-1}} J f(Y_{k-1}, Z_{k-1}) \right].
\end{align*}
Integrating this relation, backwards from the $n$-th jump to 0, leads to the formula at \eqref{eq_semigroup}. Let us now calculate the infinitesimal generator of \( (T(t))_{t \ge 0} \). Note that jumps (including phantom ones) arrived according to the Poisson process $ (N(t))_{t \ge 0} $. Thus the probability of having two jumps in the interval \( [0,t] \) decreases faster than $ t $ as $ t \rightarrow 0 $. This is why we isolate the first two terms in \eqref{eq_semigroup} and get  
\begin{align}
T(t)f &= e^{- \lambda t} K_tf + \mathbb{E} \left[ \mathds{1}_{U_1 \le t} \mathds{1}_{U_2 \ge t - U_1} K_{U_1}J K_{t - U_1}f \right] + R(t,f) \nonumber \\
&= e^{- \lambda t} K_tf + \lambda e^{- \lambda t} \int_0^t K_uJ K_{t - u} \intd{u} + R(t,f). \label{eq_dvp_expression_semigroup}
\end{align}
As stated,
\[ \Vert R(t,f) \Vert_{\infty} \le \Vert f \Vert_{\infty} \Prob(N(t) > 1) = \Vert f \Vert_{\infty} (1 - e^{-\lambda t}(1 + \lambda t)) \]
so 
\[ \frac{\Vert R(t,f)\Vert_{\infty}}{t} \rightarrow 0 \quad \text{when} \quad t \rightarrow 0.\]
We then use Taylor expansion on the term $ e^{-\lambda t} $ and simplify to get
\[ \frac{T(t)f - f}{t} = \frac{K_tf - f}{t} - \lambda K_tf  + \frac{\lambda}{t} \int_0^t K_uJK_{t-u}f\intd{u} + o(1). \]
To conclude the proof, we need to show that
\[ \frac{K_tf - f}{t} \to \sum_{i=1}^n \langle F_i, \nabla_if \rangle, \quad K_tf \to f, \quad \text{and} \quad \frac{1}{t} \int_0^t K_uJK_{t-u}f\intd{u} \to Jf \]
uniformly when \( t \rightarrow 0 \). We are exactly in the setting of~\citep[Chapter 16, assumptions page 253]{Batkai2017}. The convergence of the first term (usually called Koopman semigroup) is done at~\cite[pages 255-256]{Batkai2017}. The two other convergences can be proved using the same two ingredients: the uniform continuity of $ f $ and the Lipschitz continuity of the flow in its first variable. The uniform continuity of $ f $ is immediate given the chosen space of test functions, and the property on the flow is a consequence of the vector field regularity.
Finally, after simplification, we obtain 
\[ \frac{T(t)f - f}{t} \to Lf, \]
where $ L $ is the operator defined at \eqref{eq_generator}. In order to show that \( (T(t))_{t \ge 0} \) is strongly continuous, it is sufficient to show that for every \( t \ge 0 \) and for every function \( f \), \( T(t)f \) is continuous and that \( \Vert T(t)f - f \Vert \) tends to 0 as \( t \rightarrow 0 \). The regularity of the flow \( \varphi \) to the initial condition gives the continuity of \( Kf \). Combined with the continuity of \( Jf \), we obtain the continuity of \( T(t)f \). The second part is the result of \eqref{eq_dvp_expression_semigroup} and the fact that \( K_tf \rightarrow f \) uniformly when \( t \rightarrow 0\). Note that the strong continuity of the semigroup \( (T(t))_{t \ge 0} \) implies that it is characterized by its infinitesimal generator \( L \) (see \citep[page 122]{Batkai2017}).
\end{proof}

\phantomsection
\addcontentsline{toc}{section}{Code Availability}
\paragraph{Code Availability.}\hspace{-2mm}
An implementation of the simulation algorithm in Python is available within the Harissa package at \url{https://github.com/ulysseherbach/harissa} along with tutorial notes.

\phantomsection
\addcontentsline{toc}{section}{Acknowledgements}
\paragraph{Acknowledgements.}\hspace{-2mm}
The second author is very grateful to Asmaa Labtaina for preliminary work on this subject.
We also thank the anonymous reviewers for their constructive comments.

\phantomsection
\addcontentsline{toc}{section}{Funding}
\paragraph{Funding.}\hspace{-2mm}
This work was supported in part by the Programmes et Équipements Prioritaires de Recherche (PEPR) Santé Numérique under Project 22-PESN-0002.

\phantomsection
\addcontentsline{toc}{section}{References}

{\small}

\end{document}